\documentclass[twocolumn,pra]{revtex4-2}

\usepackage{amsmath}
\usepackage{amssymb}
\usepackage{amsthm}
\usepackage{physics}
\usepackage{tikz}
\usepackage{tikzit}
\usepackage{circuitikz}
\usepackage{hyperref}
\usepackage{braket}

\input{styles.tikzdefs}

\tikzstyle{discarding}=[fill=white, draw=black, shape=circle, style=upground]
\tikzstyle{wide_discarding}=[fill=white, draw=black, shape=circle, style=upground, yscale=1.3]
\tikzstyle{smalldiscarding}=[fill=white, draw=black, style=upground, scale=0.75]
\tikzstyle{backdiscard}=[fill=white, draw=black, shape=circle, style=downground, scale=0.5]
\tikzstyle{smallbackdiscard}=[fill=white, draw=black, shape=circle, style=downground, scale=0.75]
\tikzstyle{state}=[fill=white, draw=black, style=triang, tikzit shape=rectangle]
\tikzstyle{effect}=[fill=white, draw=black, style=triangdag]
\tikzstyle{black_dot}=[style=dot, fill=black]
\tikzstyle{white_dot}=[style=dot, fill=white]
\tikzstyle{qblack_dot}=[style=ddot, fill=black]
\tikzstyle{qwhite_dot}=[style=ddot, fill=white]
\tikzstyle{whitephase}=[style=wphase dot, fill=white]

\tikzstyle{dottededge}=[-, dash pattern=on 2pt off 1pt]
\tikzstyle{thin_edge}=[-, style=thin]
\tikzstyle{thick_edge}=[-, style=thick]
\tikzstyle{thicker_edge}=[-, style=thicker]
\tikzstyle{double edge}=[-, style=doubled, draw=black, tikzit draw={rgb,255: red,18; green,168; blue,191}]
\tikzstyle{arrow}=[->]
\tikzstyle{new edge style 1}=[-, draw={rgb,255: red,242; green,233; blue,206}, fill={rgb,255: red,242; green,233; blue,206}]
\tikzstyle{morphism_shade}=[-, draw=black, fill={rgb,255: red,242; green,233; blue,206}, line join=bevel]
\tikzstyle{supermap_shade}=[-, fill={rgb,255: red,216; green,215; blue,242}, draw=black, line join=bevel]
\tikzstyle{hole_shade}=[-, fill=white, draw=black, line join=bevel]
\tikzstyle{new edge style 2}=[-, draw={rgb,255: red,14; green,188; blue,83}]
\tikzstyle{green}=[-, fill=none, draw={rgb,255: red,0; green,106; blue,106}]
\tikzstyle{green_dotted}=[-, draw={rgb,255: red,0; green,106; blue,106}, dash pattern=on 1pt off 0.7pt]

\usetikzlibrary{circuits.ee.IEC}
\newcommand{\Ground}{%
  \mathbin{\text{\begin{tikzpicture}[circuit ee IEC,yscale=0.6,xscale=0.6,baseline=0ex]
  \draw (0,-0.5ex) to (0,2.5ex) node[ground,rotate=90,xshift=.75ex] {};
  \end{tikzpicture}}}%
}

\newcommand{\cat}[1]{\mathcal{#1}}
\newcommand{\comb}{\mathsf{Comb}}
\newcommand{\qbox}{\mathsf{QBox}}
\newcommand{\morph}[1]{\xrightarrow{#1}}
\newcommand{\fhilb}{\mathsf{FHilb}}
\newcommand{\CP}{\mathsf{CP}}
\newcommand{\CPTP}{\mathsf{CPTP}}
\newcommand{\mat}[1]{\mathsf{Mat}_{#1}}
\newcommand{\stoch}{\mathsf{Stoch}}
\newcommand{\Split}{\mathsf{Split}}
\newcommand{\hypdec}{\mathtt{hypdec}}

\newcommand{\reals}{\mathbb{R}}

\newcommand{\C}{\cat{C}}
\newcommand{\D}{\cat{D}}

\theoremstyle{definition}
\newtheorem{defn}{Definition}
\theoremstyle{plain}

\theoremstyle{plain}
\newtheorem{lem}{Lemma}
\theoremstyle{plain}
\newtheorem{thm}{Theorem}
\theoremstyle{plain}

\theoremstyle{remark}
\newtheorem*{remark}{Remark}
\theoremstyle{definition}
\newtheorem{example}{Example}

\begin{document}

\author{James Hefford}
\email{james.hefford@inria.fr}
\author{Matt Wilson}
\email{matthew.wilson@centralesupelec.fr}
\affiliation{Université Paris-Saclay, CNRS, ENS Paris-Saclay, Inria, CentraleSupélec, Laboratoire Méthodes Formelles}

\date{March 23, 2026}

\title{Decoherence to quantum theory from a causally-indefinite post-quantum theory}

\begin{abstract}
    We find a process satisfying the axioms of hyper-decoherence which produces standard quantum theory from the theory of quantum boxes (higher-order quantum theory with the non-signalling tensor product). 
    This hyper-decoherence map evades the no-go theorem of Lee and Selby \cite{lee_nogo} by relaxing constraints on signalling to the past and the uniqueness of purifications.
    We discuss some natural opposing conclusions: that the existence of this map might be evidence of a genuine hyper-decoherence process producing causal quantum theory from its causally-indefinite higher-order theory; or that it serves as an indication that the axioms of hyper-decoherence might need careful re-consideration, especially regarding the subtle albeit central role that purity plays.
\end{abstract}

\maketitle

\section{Introduction}
One way to understand the emergence of classical theory from quantum theory is via the process of decoherence, in which the limited power or control of observers forbids them direct access to the quantum realm. In the knowledge that quantum theory is not a complete description of our physical world (for instance in its lack of a satisfactory accommodation of gravity), one is naturally led to the question of how quantum theory could emerge from some yet-to-be-defined post-quantum theory, via an analogous process of \textit{hyper}-decoherence due to constraints on the powers of purely-quantum observers.

This idea appears to have been first explicitly discussed in \cite{zyczkowski} in relation to deriving quantum theory from minimal and physically reasonable axioms \cite{hardy}, and lies within the broader problem of singling quantum theory out amongst the larger class of Generalised Probabilistic Theories (GPTs) \cite{barrett_gpts}, Operational Probabilistic Theories (OPTs) \cite{chiribella_purification,chiribella_informational} or Categorical Probabilistic Theories (CPTs) \cite{gogioso_cpt}. 

The search for such post-quantum theories appears to be cut short by the no-go theorem of \cite{lee_nogo}, which establishes that there can be no such post-quantum theory satisfying both causality and the existence of unique purifications.
Naturally, one can question the validity of these assumptions in the hope of bypassing this theorem, and a number of partial toy theories have been suggested which exhibit some notion of hyper-decoherence \cite{zyczkowski,dakic_cubes,gogioso_hypercubes,hefford_hypercubes}, each one breaking, more or less severely, at least one axiom we would expect of a physically reasonable theory \cite{lee_interference} or of a physically reasonable hyper-decoherence map, thereby allowing them to side-step the no-go theorem.

These toy theories however, do not just break causality or uniqueness of purifications. Quartic Quantum Theory \cite{zyczkowski} and density cubes \cite{dakic_cubes} suffer from substantial foundational issues including ill-defined processes and joint systems \cite{lee_interference}.
Density hypercubes \cite{gogioso_hypercubes,hefford_hypercubes} on the other hand can be considered a legitimate theory in the sense that it is a CPT \cite{gogioso_cpt} with tomography, however, its hyper-decoherence process suffers from being not only non-causal but also non-deterministic \cite{hefford_hypercubes}.

In the elusive search for a satisfactory post-quantum theory, a more principled approach could be to examine toy models of aspects of quantum gravity, which we expect to be a theory more fundamental than traditional quantum theory. Notably, an influential proposal of \cite{Hardy_2007}, that theories which unify quantum theory with gravity ought also to exhibit non-fixed causal structures, appears to draw a striking parallel with the non-causal conclusion of the no-go of \cite{lee_nogo}. This in turn begs the question, \textit{could quantum theory with indefinite causal order be that elusive post-quantum theory which hyper-decoheres to traditional quantum theory?}.

In this article we argue the affirmative, by finding a map satisfying the axioms of hyper-decoherence from the \textit{theory of quantum boxes} $\qbox$ into standard quantum theory, where $\qbox$ is the most natural fragment of higher-order quantum theory for modelling indefinite causal order.
On the one hand, the transition from higher to lower order has a natural interpretation in terms of the reduction in power of observers.
\begin{figure}[h]
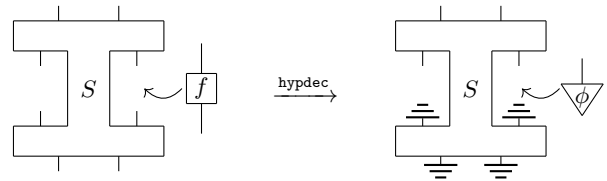

    \tikzfig{figs/spacetime_outside} 
    \caption{Hyper-decoherence of causal quantum theory from causally-indefinite higher-order quantum theory.}
\end{figure}
On the other hand, if one finds this map to be too far outside of what might be expected for a map that resembles decoherence, we consider the alternative conclusion, that the axioms of hyper-decoherence might be incomplete, or at least, in need of refinement.

Whilst $\qbox$ is indeed non-causal, we highlight the fact that the essential feature of post-quantum theories for the ruling out of hyper-decoherence, is really the existence of \textit{unique} purifications.
The theory $\qbox$ does indeed support purifications (in the sense of convex geometry) for general states, however, those purifications are not unique up to a reversible transformation on the environment.
This occurs because any channel which discards and prepares a pure state is extremal in the convex space of quantum channels.
Therefore, it is the tension in our understanding of purity within post-quantum theories which in turn allows for a way to evade to no-go theorem of \cite{lee_nogo}. 

\section{Process Theories}\label{sec:hopt}
We will develop the results of this paper in the setting of \textit{process theories} also known as symmetric monoidal categories (SMCs) \cite{coecke_kissinger}.
A process theory $\C$ consists of a collection of systems or \textit{objects} which we denote in capital $H,K$ etc.\ and a collection of processes or \textit{morphisms} for instance $f:H\morph{}K$.
These processes can be composed in sequence $gf$ and in parallel by tensor product $g\otimes f$ with these two compositions compatible in the expected way.
Each system $H$ comes with an identity process $1_H$ with the property $1_K f = f= f 1_H$ and it is possible to invertibly swap the order of systems in a tensor product $\sigma_{H,K}:H\otimes K\morph{}K\otimes H$.
There is a trivial system $I$ with the property that $H\otimes I \cong H$ for every system $H$.

\begin{example}[Pure Quantum Theory]
    There is a process theory $\fhilb$ whose systems are the finite dimensional Hilbert spaces $H$ and whose processes are the linear maps.
    The parallel composition is the usual tensor product.
\end{example}

\begin{example}[Mixed Quantum Theory]
    There is a process theory $\CP$ whose systems are the finite dimensional $C^*$-algebras and whose processes are the completely positive maps.
    The tensor product is the standard one.
    There is a sub-process theory $\CPTP$ of only the completely positive trace preserving maps.
\end{example}

\begin{example}[Classical Theory]
    There is a process theory $\mat{\reals^+}$ whose systems are the natural numbers $n\in\mathbb{N}$ and whose processes $m\morph{}n$ are the $m\times n$ matrices with entries from $\reals^+$.
    Sequential composition of matrices is given by matrix multiplication and parallel composition by Kronecker product.
    There is a sub-process theory $\stoch$ of only the stochastic matrices.
\end{example}

\begin{defn}
    Given a process theory $\C$ there is another process theory $D(\C)$ whose systems are formal tensor products of pairs $\otimes_{i=1}^n [H_i,H_i]$ where each $H_i$ is a system of $\C$.
    A process $f:\otimes_i[H_i,H_i]\morph{}\otimes_j[K_j,K_j]$ is a process $f:\otimes_i(H_i\otimes H_i)\morph{} \otimes_j (K_j\otimes K_j)$ of $\C$.
    Composition, identities and the tensor are inherited from $\C$, with $[I,I]$ as the trivial system.
\end{defn}

We interpret a process $f:\otimes_i[H_i,H_i]\morph{}\otimes_j[K_j,K_j]$ in $D(\C)$ as a higher-order map taking a family a processes $\{H_i\morph{}H_i\}_i$ to a family $\{K_j\morph{}K_j\}_j$.
In this way $D(\C)$ forms the backbone of a process theory of higher-order maps over $\C$, though one which does not account for which higher-order maps are causally valid.
In the next section we will show how to equip $D(\C)$ with a notion of causality in order to restrict its maps to only those that are causally valid.

\section{Multi-Environment Structures}\label{sec:multi_env}

Causality in process theories can be captured by equipping them with an environment structure \cite{coecke_classicality,coecke_terminality}.

\begin{defn}
    An \textit{environment structure} for a process theory $\cat{C}$ consists of a choice of a discard effect $\Ground_H$ for each system $H$ such that the choices are closed under the tensor product.
    \begin{equation*}
        \tikzfig{figs/env_struct}
    \end{equation*}
\end{defn}

An environment structure serves to give a process theory a notion of causality: we say that a process $f:H\morph{}K$ is normalised or \textit{causal} if 
\begin{equation*}
    \tikzfig{figs/causal}
\end{equation*}
so that discarding the output $K$ of $f$ is the same as discarding the input $H$.
One can always restrict the a process theory $\C$ equipped with an environment structure to just the causal maps giving the subtheory $\C^{\Ground}$ of maps that can be made to happen with certainty.

\begin{example}
    Mixed quantum theory $\CP$ has an environment structure given by the trace for each system.
    The causal sub-process theory $\CP^{\Ground}$ is $\CPTP$.
    Classical theory $\mat{\reals^+}$ has an environment structure given by the row vector of all 1s, $\begin{pmatrix} 1 \dots 1 \end{pmatrix}:n\morph{}1$ for each $n$.
    The causal sub-process theory $\mat{\reals^+}^{\Ground}$ is $\stoch$.
\end{example}

Causality of a process theory imposes a more traditional notion of no-signalling in terms of the outcomes of Bell scenarios.
Given any bipartite state, with each half located in spacelike separated regions, one can ask whether deterministic events occurring in one region can influence the state of the other region.
In a causal process theory the answer is no, since for any state $\ket{\psi}$ and pair of processes $f,g$ we can see that 
\begin{equation*}
    \tikzfig{figs/no_super_1} \ =  \   \tikzfig{figs/no_super_2} \ =  \  \tikzfig{figs/no_super_3} 
\end{equation*}

For process theories of higher-order maps it is expected that systems will no longer possess a unique discarding effect.
For instance, in the theory of quantum combs a system $[H,H]$ can be discarded by any map which prepares a causal state on the bottom and traces out the top,
\begin{equation*}
    \tikzfig{figs/comb_environment} \qquad \text{where} \qquad \tikzfig{figs/causal_state} = 1
\end{equation*}
Thus one may worry that it is not possible to express no-signalling in such theories.
In fact, the uniqueness of the discard effect is not of paramount importance to the previous argument and one can directly formulate the principle of no superluminal signalling via bipartite states when there is more than one deterministic effect \cite{Wilson_2021}.
To do so, we will introduce the notion of a multi-environment structure to capture all the permitted ways of discarding a system.

\begin{defn}
    A \textit{multi-environment structure} $\Sigma$ for a process theory $\cat{C}$ consists of, for each system $H$, a family of effects $\Sigma_H = \{\Ground:H\morph{}I\}$ such that: 
    \begin{equation*}
        \tikzfig{figs/multi_env_struct1},
    \end{equation*}
    \begin{equation*}
        \Sigma_I = \left\{ \tikzfig{figs/multi_env_struct2} \ \right\},
    \end{equation*}
    \begin{equation*}
        \tikzfig{figs/multi_env_struct3}
    \end{equation*}
\end{defn}

\begin{remark}
    Our notion of multi-environment structure is related to the one of \cite{gogioso_cpm} but the symmetry conditions are distinct.
    Here we ask for invariance under the symmetries $\sigma$ of $\cat{C}$, whereas in \cite{gogioso_cpm} the symmetries are asked for \textit{internally} to each system.
\end{remark}
Note that a multi-environment structure with one effect for each system is just an ordinary environment structure.

We can then say that a theory with a multi-environment structure is non-signalling if and only if for every state $\ket{\psi}$ and every pair of discarding effects $\Ground_1, \Ground_2$ we have the following
\begin{equation*}
    \tikzfig{figs/no_super_5} \ =  \   \tikzfig{figs/no_super_4}
\end{equation*}
or equivalently, if for every state $\ket{\psi}$, discarding effect $\Ground$, and pair of processes $f,g$ 
\begin{equation*}
    \tikzfig{figs/no_super_3} \ =  \   \tikzfig{figs/no_super_1}
\end{equation*}

Note that a theory can be non-signalling, without imposing that every process is no-backwards in time signalling in the following sense.
\begin{defn}
    Let $\cat{C}$ be a process theory with a multi-environment structure.
    A process $f:H\morph{}K$ is \textit{no-backwards-signalling} if there exists $\Ground_H\in \Sigma_H$ such that for every $\Ground_K\in\Sigma_K$,
    \begin{equation*}
        \tikzfig{figs/backwards_sig}
    \end{equation*}
\end{defn}
We see that a process is no-backwards-signalling when for any permitted way of discarding its output, we get the same discard on its input and thus no deterministic event on the output can influence the state of the input.

With these formalities in place we can now generalise the notion of deterministic process to theories with a multi-environment structure.

\begin{defn}
    In a process theory $\cat{C}$ with a multi-environment structure we say that a process $f:H\morph{}K$ is \textit{deterministic} if for any system $L$, and any $\Ground_{K\otimes L}\in \Sigma_{K\otimes L}$,
    \begin{equation*}
        \tikzfig{figs/deterministic}.
    \end{equation*}
\end{defn}

In the case that there is one unique effect in $\Sigma_H$ for each $H$, the previous definition reduces to the usual notion of deterministic/causal/normalised process.

Deterministic maps are compositionally well-behaved.
It is fairly straightforward to see that the identity process on any system is deterministic and deterministic maps are closed under composition and tensor product.
Thus the deterministic maps from $\cat{C}$ form a sub-process theory which we denote $\cat{C}^{\Ground}$.

Let us now consider the theory $D(\C)$ which can be equipped with two different multi-environment structures leading to two different deterministic subtheories that are non-signalling, but in which not every process is no-backwards in time signalling.

\begin{example}\label{ex:comb_env}
    Consider the process theory $D(\C)$ where $\C$ is a process theory with an environment structure.
    $D(\C)$ has a multi-environment structure given by picking $\Sigma_{\otimes_i [H_i,H_i]}$ to be the set of higher-order maps given by discarding the top $H:=\otimes_i H_i$ and preparing a causal state on the bottom $H=\otimes_i H_i$.
    \begin{equation*}
        \Sigma_{\otimes_i [H_i,H_i]} := \left\{ \tikzfig{figs/comb_environment} \quad : \quad \tikzfig{figs/causal_state} = 1 \right\}
    \end{equation*}
    In the case $\C=\CP$, the process theory $D(\CP)^{\Ground}$ is that of deterministic quantum combs \cite{chiribella_networks}.
\end{example}
\begin{remark}
    For those familiar with the $\mathsf{Caus}$-construction \cite{kissinger_caus}, the previous example is equivalent to the sub-process theory of $\mathsf{Caus}(\CP)$ generated by the signalling tensor product $\rotatebox[origin=c]{180}{$\&$}$ on systems of type $[H,H]$ with each $H$ a first-order system.
    It is also equivalent to $\comb(\CPTP)$ as defined in \cite{hefford_coend,hefford_supermaps}.
\end{remark}

\begin{example}\label{ex:supermap_env}
    The process theory $D(\CP)$ has a multi-environment structure given by picking $\Sigma_{\otimes_i [H_i,H_i]}$ to be the set of process matrices $W$ \cite{oreshkov} on $\otimes_i [H_i,H_i]$.
    \begin{equation*}
        \Sigma_{\otimes_i [H_i,H_i]} := \left\{ \ \tikzfig{figs/multi_env_struct_proc_mat} \ \right\}
    \end{equation*}
    Note that this is \textit{not} the same multi-environment structure as in Example \ref{ex:comb_env}, in particular $\otimes_i[H_i,H_i]$ is generally not equal to $[\otimes_i H_i,\otimes_i H_i]$. 
    Moreover, the collection of discard maps on the latter coincides with those of Example \ref{ex:comb_env}, while there are more valid discard maps on the former given by the process matrices which are not of the form of a preparation-discard.
\end{example}

The multi-environment structure of the previous example leads to the definition of the theory $\qbox$ which will be the focus of the remainder of this article and from which we will show quantum theory can hyper-decohere.

\begin{defn}
    Equipping $D(\CP)$ with the multi-environment structure of Example \ref{ex:supermap_env} yields the process theory $\qbox := D(\CP)^{\Ground}$ of second-order quantum operations under the non-signalling tensor product as its deterministic subtheory.
    The systems are generated by taking arbitrary non-signalling tensor products of those of the form $[H,H]$ for a finite dimensional Hilbert space $H$, and thus take the form $\otimes_{i=1}^n [H_i,H_i]$.
    A process $S:\otimes_i [H_i,H_i]\morph{} \otimes_j [K_j,K_j]$ is a quantum supermap which takes non-signalling channels as its input and outputs a non-signalling channel.
    The tensor product of $\qbox$ is the non-signalling tensor and the trivial system is $[\mathbb{C},\mathbb{C}]$. 
\end{defn}

\begin{remark}
    For those familiar with the $\mathsf{Caus}$-construction \cite{kissinger_caus}, $\qbox$ is equivalent to the sub-process theory of $\mathsf{Caus}(\CP)$ generated by the non-signalling tensor product $\otimes$ on systems of type $[H,H]$ with each $H$ a first-order system.
\end{remark}

The process theory $\qbox$ contains all the quantum processes with indefinite causal order as defined in \cite{chiribella_supermaps,oreshkov}. 
Unsurprisingly, given the name used to refer to its tensor product, $\qbox$ is a non-signalling theory, a proof of this fact is given in Appendix \ref{app:superluminal}.

\section{Hyper-decoherence}\label{sec:hyperdec}

In this section we will consider how one process theory $\C$ might be contained in another $\D$ by a decoherence-like process.
There are a few properties we ought to expect of such a hyper-decoherence map based on the arguments of \cite{lee_nogo}, in particular,
\begin{itemize}
    \item[Ax1:]\label{ax:1} it is idempotent so that once hyper-decoherence has occurred any further hyper-decoherence leaves the systems invariant,
    \item[Ax2:]\label{ax:2} it is no-backwards-signalling, banning signalling from the future into the past,
    \item[Ax3:]\label{ax:3} it copreserves the purity of states in $\cat{C}$, so that any state which hyper-decoheres to a pure state of $\cat{C}$ must be pure in $\cat{D}$,
    \item[Ax4:]\label{ax:4} it preserves maximal mixtures, meaning that when hyper-decoherence is applied to a maximally mixed state of $\cat{D}$ it returns a maximally mixed state in $\cat{C}$.
\end{itemize}

The final two axioms perhaps require further explanation: their idea is to ban certain undesirable properties with regards to the purity and mixedness of states under hyper-decoherence.

\begin{defn}
    A deterministic state of $\D$ is \textit{pure} if it cannot be written as a convex combination of other distinct deterministic states of $\D$.
    Otherwise we say that the state is \textit{mixed}.
    Similarly, a deterministic state of the subtheory $\C$ is \textit{pure in $\C$} if it cannot be written as a convex combination of other deterministic states in $\C$. 
\end{defn}

If we view a pure state as a state of maximal knowledge, then if a mixed state, and thus a state of less-than-maximal knowledge, could become pure under hyper-decoherence we would have a rather odd situation in which hyper-decoherence leads to a gain in knowledge.
This is banned by axiom \hyperref[ax:3]{Ax3} requiring that pure states in $\C$ are also pure in $\cat{D}$.

\begin{defn}
    A deterministic state $\rho$ of $\D$ is \textit{maximally mixed} if every deterministic state of $\D$ appears in some convex decomposition of $\rho$ and if $\rho$ is invariant under invertible transformations of $\D$.
    A deterministic state of the subtheory $\C$ is \textit{maximally mixed in $\C$} if the same conditions hold replacing everywhere $\D$ with $\C$.
\end{defn}

Similarly, if the maximally mixed state in $\C$ was not the maximally mixed state in $\cat{D}$ we could map a state of minimal knowledge to one of greater knowledge under hyper-decoherence.
This is banned by \hyperref[ax:4]{Ax4}.

The final property of hyper-decoherence is that applying it to the systems and processes of $\D$ yields the systems and processes of $\C$.
This can be formalised using the \textit{idempotent splitting} or \textit{Karoubi envelope} of a process theory.
The idea is to produce a new process theory $\Split(\D)$ from $\D$ whose systems are the idempotents of $\D$.
This turns the decoherence maps in $\D$ into objects in $\Split(\D)$ which we can think of as the decohered systems with the maps between such systems compatible with the decoherence.
This method has been used extensively in the categorical quantum mechanics literature to model quantum-classical decoherence \cite{selinger_idempotents,heunen_completely,coecke_classicality,gogioso_thesis,gogioso_cpt} and extended to hyper-decoherence in \cite{gogioso_hypercubes,selby_thesis,hefford_thesis,hefford_galois}.

\begin{defn}
    Given a process theory $\cat{C}$, the \textit{idempotent splitting} $\Split(\cat{C})$ has systems of the form $(H,e)$ where $e:H\morph{}H$ is an idempotent.
    A process $f:(H,e)\morph{}(K,e')$ is a process $f: H\morph{}K$ of $\cat{C}$ that is invariant under the idempotents, $f = e'fe$.
    Composition and tensor product are inherited in the obvious way from $\cat{C}$, with the identity on $(H,e)$ given by $e$.
\end{defn}

So, given the process theory $\D$ we upgrade it to the process theory $\Split(\D)$ and then study the collection of systems given by the hyper-decoherence maps in $\Split(\D)$.
To show that this is equivalent to the desired process theory $\C$, we need the notion of an equivalence of process theories.

\begin{defn}
    Let $\C$ and $\D$ be process theories.
    An \textit{equivalence of process theories} consists of a map $F:\C\morph{}\D$ on systems and processes which preserves composition, tensor product and identity processes.
    Furthermore, $F$ must be a bijection on the sets of processes so that there is an isomorphism $\C(H,K) \cong \D(FH,FK)$ between processes $H\morph{}K$ in $\C$ and processes $FH\morph{}FK$ in $\D$.
    Finally, $F$ must be an essential surjection on the systems, that is for every system $K$ of $\D$, there exists a system $H$ of $\C$ and an isomorphism $FH\cong K$.
\end{defn}

\begin{remark}
    An an equivalence of process theories is more traditionally known as a monoidal equivalence of categories, that is a fully-faithful and essentially surjective on objects monoidal functor $F:\C\morph{}\D$.
\end{remark}

We are now in a position to define what it means for one process theory to hyper-decohere to another.
\begin{defn}
    A process theory $\cat{D}$ \textit{supports hyper-decoherence} to a process theory $\C$ if every system $H$ of $\D$ possesses a hyper-decoherence map $\hypdec:H\morph{}H$ satisfying axioms \hyperref[ax:1]{Ax1} - \hyperref[ax:4]{Ax4}, such that the full sub-process theory of $\Split(\D)$ spanned by systems of the form $(H,\hypdec)$ is equivalent to $\cat{C}$, and at least one of the hyper-decoherence maps is not the identity process.
\end{defn}

In the case that a process theory hyper-decoheres to quantum theory we will call the theory post-quantum.

\begin{defn}
    A process theory $\D$ is \textit{post-quantum} if it supports hyper-decoherence to the process theory $\CPTP$.
\end{defn}

\section{Hyper-decoherence of Quantum Theory from Higher-Order Quantum Theory}\label{sec:hyperdec_hoqt}

In this section we will prove our main result showing that quantum theory can hyper-decohere from the post-quantum theory $\qbox$ of second-order quantum operations.

The hyper-decoherence maps are given by the following processes $\otimes_i [H_i,H_i] \morph{} \otimes_i [H_i,H_i]$ in $\qbox$.
\begin{equation*}
    \tikzfig{figs/ico_hypdec} 
\end{equation*}
Explicitly, the process $\hypdec$ is given by the completely depolarising map on the bottom of the supermap and the identity process on the top.
This is a deterministic superchannel, meaning that $\hypdec(f_1,  \dots, f_n)$ is a CPTP map when applied to CPTP maps $f_1,  \dots, f_n$.
Consequently, it is a process in $\qbox$.

\begin{lem}
    $\hypdec$ is idempotent and no-backwards-signalling.
\end{lem}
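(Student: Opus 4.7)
The plan splits by the two claims. Idempotence is immediate from the idempotence of the completely depolarising map $\Delta:H\morph{}H$, since $\Delta(\rho)=I/d$ for every $\rho$ and therefore $\Delta\circ\Delta=\Delta$. Because $\hypdec$ places $\Delta$ locally on each of the bottom wires $H_i$ and the identity on each of the top wires, $\hypdec\circ\hypdec$ contains $\Delta\circ\Delta$ on every bottom wire, which collapses to $\hypdec$. The multipartite case requires no extra work; I would present the argument as a one-line diagrammatic equation, reading off both identities from the local structure of $\hypdec$.

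For no-backwards-signalling, the strategy is to exhibit a single input discard to which every output discard collapses. My candidate $\Ground_{\text{in}}\in\Sigma_{\otimes_i[H_i,H_i]}$ is the preparation-discard process matrix which prepares the maximally mixed state $I/d_i$ on each $H_i^{\text{in}}$ and traces out each $H_i^{\text{out}}$; being a tensor product of single-slot process matrices, it is a legitimate element of $\Sigma$. The key move is to factor $\Delta_i=\eta_i\circ\epsilon_i$ with $\epsilon_i$ the trace on $H_i$ and $\eta_i$ the preparation of $I/d_i$. Composing any process matrix $\Ground\in\Sigma$ with $\hypdec$ then severs the causal structure of $\Ground$: the wires that $\Ground$ feeds into each channel slot are absorbed by the $\epsilon_i$'s, while the channels $f_i$ receive fresh $I/d_i$ from the $\eta_i$'s. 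Hence $\Ground\circ\hypdec$ splits as $\bigotimes_i\eta_i$ feeding into the $f_i$'s, followed by the ``marginal'' of $\Ground$ obtained by tracing all its channel-input wires, acting on the outputs of the $f_i$'s.

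It then remains to show that this marginal equals the standard trace $\bigotimes_i\epsilon_i$, independently of the chosen $\Ground$. This is the only step where the defining structure of process matrices is used: applied to the specific CPTP tuple ``trace input, prepare density $\sigma_i$'' at each slot, the normalisation $\Ground(\text{CPTP},\dots,\text{CPTP})=1$ forces $M:=\mathrm{tr}_{\bigotimes_i H_i^{\text{in}}}\Ground$ to satisfy $\mathrm{tr}[M(\bigotimes_i\sigma_i)]=1$ for all densities $\sigma_i$. Since product density operators span all operators on $\bigotimes_i H_i^{\text{out}}$, this pins $M$ down uniquely as $\bigotimes_i I_{H_i^{\text{out}}}$, the operator representing the standard trace. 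Plugging back yields $\Ground\circ\hypdec=\Ground_{\text{in}}$ for every $\Ground\in\Sigma$, which is the required no-backwards-signalling condition.

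I expect this final marginal lemma to be the main obstacle, as it is the only part that cannot be handled by pure diagrammatic reshuffling and instead relies on the precise normalisation cutting process matrices out of general effects on $\otimes_i[H_i,H_i]$. Everything else follows from the local structure of $\hypdec$ and the factorisation $\Delta=\eta\circ\epsilon$.
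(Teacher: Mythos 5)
Your proof is correct and follows essentially the same route as the paper: idempotence is read off from $\Delta\circ\Delta=\Delta$ on the bottom wires, and no-backwards-signalling is established by showing that every process matrix composed with $\hypdec$ collapses to the fixed prepare-maximally-mixed-and-discard effect $\Ground_{\mathrm{in}}$. The paper presents this collapse as a single diagrammatic identity for arbitrary $W$; your marginal computation $\mathrm{tr}_{\mathrm{in}}(W)=\bigotimes_i I_{H_i^{\mathrm{out}}}$, obtained from normalisation on trace-and-prepare channels and the spanning of product density operators, is precisely the justification that identity leaves implicit.
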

\begin{proof}
Idempotency follows from checking the top and bottom of the hyper-decoherence map separately. The top is trivial since it is the identity process, the bottom follows easily by noting that
\begin{equation*}
    \tikzfig{figs/idem_1} \ = \  \tikzfig{figs/idem_2} \ = \  \tikzfig{figs/idem_3}
\end{equation*}
No-backwards-signalling follows since for any multi-environment element (i.e.\ process matrix) $W$,
\begin{equation*}
    \tikzfig{figs/effect_1} \ = \  \tikzfig{figs/effect_3}
\end{equation*}
\end{proof}

\begin{lem}
    There is an equivalence of process theories between the full sub-process theory of $\Split(\qbox)$ spanned by systems of the form $(\otimes_i [H_i,H_i],\hypdec)$ and $\CPTP$.
\end{lem}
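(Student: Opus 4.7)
The plan is to construct a strong monoidal, fully faithful, essentially surjective functor $F:\CPTP \to \mathcal{E}$, where $\mathcal{E}$ denotes the full sub-process theory of $\Split(\qbox)$ spanned by the systems $(\otimes_i[H_i,H_i],\hypdec)$. On objects, set $F(\otimes_i H_i) := (\otimes_i [H_i,H_i], \hypdec)$, where the tensor on the right is the non-signalling tensor in $\qbox$; essential surjectivity is then immediate by definition of $\mathcal{E}$. On morphisms, for a CPTP map $\psi:\otimes_i H_i \to \otimes_j K_j$ define $F(\psi) := S_\psi$ to be the supermap that, on input a non-signalling channel $\phi$, outputs the constant-output channel preparing the state $\psi(\phi(\tau))$, where $\tau$ is the maximally mixed state on $\otimes_i H_i$. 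Since $\psi$ and $\phi$ are CPTP, $\psi(\phi(\tau))$ is a state and the resulting constant-output channel is a legitimate (trivially non-signalling) CPTP channel, so $S_\psi$ is a process in $\qbox$; the identity $\hypdec \circ S_\psi \circ \hypdec = S_\psi$ follows from the idempotence of depolarisation. Functoriality ($S_{\mathrm{id}} = \hypdec$, $S_{\psi' \circ \psi} = S_{\psi'} \circ S_\psi$, and $S_{\psi \otimes \psi'} = S_\psi \otimes S_{\psi'}$) then reduces to direct rewrites using these same facts.

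The main step is showing $F$ is fully faithful. Faithfulness is immediate: evaluating $S_\psi$ on the constant-output channel preparing a state $\rho$ yields the constant-output channel preparing $\psi(\rho)$, from which $\psi$ is recoverable. For fullness, fix any $S:\otimes_i [H_i,H_i]\to \otimes_j [K_j,K_j]$ satisfying $\hypdec\circ S\circ \hypdec = S$. Pre-composition by $\hypdec$ forces $S(\phi)$ to depend on $\phi$ only through $\phi(\tau)$, and post-composition by $\hypdec$ forces $S(\phi)$ to itself be a constant-output channel. Hence $S$ is determined by the assignment $\sigma_S$ sending a state $\rho$ of $\otimes_i H_i$ to the output state of $S$ evaluated on the constant-output channel preparing $\rho$. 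Linearity of $S$ as a $\CP$-morphism transfers to linearity of $\sigma_S$ on the span of states, while complete positivity and trace preservation of $\sigma_S$ follow from $S$ being a deterministic process in $\qbox$, i.e.\ from $S$ mapping CPTP channels to CPTP channels. Thus $\sigma_S$ is a CPTP map and $S = S_{\sigma_S}$.

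The main obstacle is the interaction between the non-signalling tensor in $\qbox$ and the Hilbert-space tensor in $\CPTP$. Under the chosen object assignment, $F(H\otimes K) = ([H\otimes K, H\otimes K], \hypdec)$ while $F(H)\otimes F(K) = ([H,H]\otimes [K,K], \hypdec)$ are \emph{distinct} objects of $\mathcal{E}$, so strong monoidality of $F$ requires natural isomorphisms between them rather than equalities. By the fullness argument above, morphisms between these two objects correspond bijectively to $\CPTP(H\otimes K, H\otimes K)$, so taking the identity CPTP map furnishes the required iso, with naturality following from functoriality. In executing the fullness argument in the multi-partite case, one must also verify that as $\phi$ ranges over non-signalling channels on $\otimes_i H_i$ the state $\phi(\tau)$ ranges over \emph{all} states of $\otimes_i H_i$, including entangled ones; this is immediate because any constant-output preparation channel is trivially non-signalling.
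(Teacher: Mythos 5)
Your proof is correct, and it is essentially the paper's argument run in the opposite direction: the functor you build, $\psi\mapsto S_\psi$ (discard the bottom, prepare the maximally mixed state, apply $\psi$ on top), is exactly what the paper introduces at the end of its proof as the inverse functor $G:\CPTP\to\mathsf{Hypdec}$, and your fullness argument rests on the same normal form the paper uses for its forward functor, namely that $\hypdec\circ S\circ\hypdec$ forces any hyper-decohered supermap to factor as ``discard bottom input, core CPTP map on top, prepare maximally mixed on bottom output''. The paper instead takes as primary the functor $F:\mathsf{Hypdec}\to\CPTP$ given by partial application (plug the maximally mixed state, discard), which buys strict monoidality on objects since $(\otimes_i[H_i,H_i],\hypdec)\otimes(\otimes_j[K_j,K_j],\hypdec)$ maps on the nose to $(\otimes_iH_i)\otimes(\otimes_jK_j)$; your direction pays for this with the non-trivial structure isomorphisms $([H\otimes K,H\otimes K],\hypdec)\cong([H,H]\otimes[K,K],\hypdec)$, which you correctly identify and construct (in the paper these appear as the components of the natural isomorphism $\eta$, the supermaps that are the identity on top and prepare the maximally mixed state on the bottom). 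Two small points of care: first, your conclusion $S=S_{\sigma_S}$ should be phrased as an equality of the underlying morphisms of $D(\CP)$, obtained by writing the depolarising part of $\hypdec$ as discard-then-prepare and factoring $S=\hypdec\circ S\circ\hypdec$ accordingly, rather than as mere agreement of the actions on all non-signalling channel inputs (which by itself does not determine a morphism of $D(\CP)$); second, complete positivity of $\sigma_S$ comes from $S$ being a CP morphism of $D(\CP)$ (so that pre- and post-composing with preparation and discard yields a CP map), not just from $S$ sending CPTP channels to CPTP channels. Both repairs are immediate from what you already wrote.
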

\begin{proof}
    A formal proof is in appendix \ref{app:equivalence}.
    The core idea is to map each higher-order map into a lower order one:
    \begin{equation*}
        \tikzfig{figs/equiv_proof1_ico}
    \end{equation*}
    Indeed, since this is the partial application of a superchannel on two CPTP maps, the result is automatically CPTP.
    $F$ can then be shown to satisfy all the requirements to make it an equivalence of process theories.
\end{proof}

\begin{lem}
    The pre-image of any pure state under hyper-decoherence is pure.
\end{lem}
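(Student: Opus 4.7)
The plan is to show that the preimage of any pure state under $\hypdec$ is uniquely determined in $\qbox$, so that any purported non-trivial convex decomposition collapses by extremality.

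By the preceding equivalence, a pure state of the split sub-theory corresponds, via the equivalence $F$, to a pure density matrix $\ket{\Psi}\bra{\Psi}$ on $\otimes_i H_i$. A deterministic state $W$ of $\otimes_i[H_i,H_i]$ in $\qbox$ is a process matrix, i.e.\ a positive operator on $\bigotimes_i (H_i^{\text{in}} \otimes H_i^{\text{out}})$ normalised against tuples of CPTP channels. Since $\hypdec$ replaces the bottom (input) of each slot by the maximally mixed state and leaves the top (output) untouched, a direct diagrammatic computation yields $\hypdec(W) = (\prod_i d_i)^{-1}\, I_{\text{in}} \otimes \text{tr}_{\text{in}}(W)$, so that $F(\hypdec(W))$ is, up to normalisation, the reduced output operator $\text{tr}_{\text{in}}(W)$ on $\otimes_i H_i$.

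The hypothesis $F(\hypdec(W)) = \ket{\Psi}\bra{\Psi}$ thus reads $\text{tr}_{\text{in}}(W) \propto \ket{\Psi}\bra{\Psi}$. A standard rank-one argument, analogous to the fact that a bipartite positive operator whose partial trace is pure must itself be a product, forces every vector in the support of $W$ to split as $\ket{a}_{\text{in}} \otimes \ket{\Psi}_{\text{out}}$, whence $W = \rho_{\text{in}} \otimes \ket{\Psi}\bra{\Psi}_{\text{out}}$ for some positive operator $\rho_{\text{in}}$. The process-matrix normalisation conditions (applied, for example, to the tuple of identity channels on each slot) then pin down $\rho_{\text{in}} = I_{\text{in}}$, so the preimage of the pure state is \emph{unique}.

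Given any convex decomposition $W = \lambda W_1 + (1-\lambda) W_2$ into deterministic states of $\qbox$, linearity of $\hypdec$ and of $F$ yields a convex decomposition of $\ket{\Psi}\bra{\Psi}$ in $\CPTP$; extremality of the pure state forces $F(\hypdec(W_i)) = \ket{\Psi}\bra{\Psi}$ for $i=1,2$, and by uniqueness of the preimage $W_1 = W_2 = W$, proving $W$ is pure in $\qbox$. The main technical obstacle is performing the identification $\hypdec(W) \leftrightarrow \text{tr}_{\text{in}}(W)$ with consistent Choi/process-matrix normalisation conventions, and extracting $\rho_{\text{in}} = I_{\text{in}}$ cleanly from the process-matrix constraints; the convex-analysis step is then immediate.
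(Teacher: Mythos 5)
Your proof is correct, and it reaches the same pivotal intermediate fact as the paper --- namely that the only deterministic state of $\qbox$ hyper-decohering to a given pure state $\ketbra{\Psi}{\Psi}$ is the discard-and-prepare channel, with Choi matrix $I_{\mathrm{in}}\otimes\ketbra{\Psi}{\Psi}$ --- but by a somewhat different route at both ends. Where the paper establishes that the underlying channel $g$ is a constant preparation by averaging over arbitrary orthonormal bases and invoking extremality of the pure output, you work in the Choi picture and use the fact that a positive operator with a pure marginal must factorise, then extract $\rho_{\mathrm{in}}=I_{\mathrm{in}}$ from the normalisation of deterministic states. (One small imprecision: pairing with the identity channels alone only yields $\mathrm{tr}(\rho_{\mathrm{in}})=\prod_i d_i$; it is the prepare-and-discard effects --- equivalently, trace preservation of the underlying non-signalling channel --- that force $\rho_{\mathrm{in}}=I_{\mathrm{in}}$.) More substantively, your closing step differs: the paper proves that the discard-and-prepare channel is extremal directly via Choi's criterion, checking linear independence of $\{K_i^\dagger K_j\}$ for the Kraus operators $K_i=\ket{\Psi}\bra{i}$, whereas you deduce extremality of the preimage from its uniqueness together with extremality of the image under the affine map $F\circ\hypdec$. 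Your argument is more elementary in that it avoids citing Choi's extremality theorem and makes transparent that purity of the preimage here is really a consequence of preimage uniqueness; the paper's version buys an explicit Kraus-level certificate that the constant preparation channel is extremal among all channels, a fact of independent interest that does not reference the hyper-decoherence map at all. Both arguments are sound.
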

\begin{proof}
    Applying hyper-decoherence to a state $S$ in $\qbox$ returns:
    \begin{equation*}
        \tikzfig{figs/purity_proof1_ico}
    \end{equation*}
    where $g$ arises from the isomorphism between states in $\qbox$ and multi-partite non-signalling channels.
    Under the equivalence with quantum theory, this becomes the quantum state
    \begin{equation*}
        \tikzfig{figs/purity_proof2}
    \end{equation*}
    where we have identified wires $1, \dots, m$ in the input and output of $g$.
    For this state to be pure, $g$ must take the form of a discard-and-prepare channel for a pure state. Indeed, to have that 
    \[ 
    g\left(\frac{I}{d}\right) = \ketbra{\phi}{\phi} 
    \]
    entails that for any orthonormal basis $\{\ket{e_i}\}_{i=1}^d$, we have
    \[
    \ketbra{\phi}{\phi} = g\left(\frac{I}{d}\right) = g\left(\sum_{i=1}^d \frac{\ketbra{e_i}{e_i}}{d} \right) = \sum_{i=1}^d \frac{g(\ketbra{e_i}{e_i})}{d}.
    \]
    The extremality of $\phi$, by definition, then gives that for all $i$,
    \[
    g(\ketbra{e_i}{e_i}) = \ketbra{\phi}{\phi}.
    \]
    Since this applies to any orthonormal basis, it applies to every state, and so $g$ is the constant preparation of the pure state $\ketbra{\phi}{\phi}$.

    Now, in order for $g$ to be pure post-quantumly, it must be an extremal channel.
    Note that $g$ has a decomposition,
    \begin{equation*}
        \tikzfig{figs/purity_proof3}
    \end{equation*}
    giving Kraus operators $K_i:=\ket{\phi}\bra{i}$.
    Note that the set $\{K_i^\dag K_j\}_{i,j}$ is linearly independent and so by Choi's extremality criterion \cite{choi}, $g$ is extremal and thus pure.
\end{proof}

\begin{lem}
    The maximally mixed state is preserved by hyper-decoherence.
\end{lem}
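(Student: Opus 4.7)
The plan is to explicitly identify the maximally mixed state of $\qbox$ at a system $\otimes_i[H_i,H_i]$, compute its image under $\hypdec$, and transport across the equivalence of the previous lemma. Since deterministic states of $\qbox$ at this system are exactly the non-signalling CPTP maps $\phi\colon \otimes_i H_i \to \otimes_i H_i$, my candidate for the maximally mixed state is the constant channel $\phi_{mm}\colon \rho \mapsto I/d$ (with $d=\prod_i\dim H_i$), which discards its input and prepares the totally mixed state; this is manifestly non-signalling.

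Next I verify that $\phi_{mm}$ satisfies both defining conditions of being maximally mixed. Invariance under invertible transformations of $\qbox$ at $\otimes_i[H_i,H_i]$ follows from the fact that the Choi matrix of $\phi_{mm}$ is proportional to the identity, and hence is stable under the unitary-conjugation action that any reversible supermap induces at the Choi level. For the convex-decomposition condition I show that every deterministic state $\phi$ appears in some convex decomposition of $\phi_{mm}$, i.e.\ $\phi_{mm}=\lambda\phi+(1-\lambda)\phi'$ for some $\lambda\in(0,1)$ and non-signalling CPTP $\phi'$; this reduces to checking that $C_{\phi_{mm}}-\lambda C_\phi$ is PSD for $\lambda$ sufficiently small, which holds because $C_{\phi_{mm}}\propto I$ is full rank, while trace-preservation and non-signalling of $\phi'$ are inherited linearly. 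Having pinned down $\phi_{mm}$, applying $\hypdec$ amounts to precomposing with the per-party depolarising channel on the inputs; since $\phi_{mm}$ is already constant on its input, $\hypdec(\phi_{mm})=\phi_{mm}$. Finally, transporting across the equivalence $F$ of the previous lemma, which evaluates the associated channel on maximally mixed inputs, yields $F(\phi_{mm})=\phi_{mm}(I/d)=I/d$, precisely the maximally mixed state of $\CPTP$ at $\otimes_i H_i$.

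The main obstacle is the convex-decomposition condition for $\phi_{mm}$: showing it really sits in the relative interior of the convex set of non-signalling CPTP channels, rather than merely in the interior of all CPTP channels. The delicate point is that non-signalling and trace-preservation are linear constraints carving out an affine subset of CP maps, so "relative interior" must be measured within this subset. The key observation is that $\phi_{mm}$'s Choi matrix satisfies these constraints with positivity slack: being proportional to the identity, any sufficiently small linear perturbation that preserves the linear constraints remains PSD, hence yields a valid Choi matrix of a non-signalling CPTP channel.
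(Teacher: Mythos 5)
Your proposal is correct and takes essentially the same approach as the paper: identify the maximally mixed state of $\qbox$ as the constant (completely depolarising) channel $\rho\mapsto I/d$, observe that $\hypdec$ fixes it because it already ignores its input, and transport across the equivalence to land on $I/d$ in $\CPTP$. The paper's version is terser and simply asserts which state is maximally mixed, whereas you additionally verify the defining conditions (full-rank Choi matrix giving the interior-point/convex-decomposition property and invariance under reversible supermaps), which is a welcome but not divergent elaboration.
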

\begin{proof}
The maximally mixed state in $\qbox$ is given by \[      \tikzfig{figs/pq_maxmix},  \]
which is easily seen to be sent to itself by the hyper-decoherence map.
Under the equivalence of process theories between $\Split(\qbox)$ and $\CPTP$, this is the maximally mixed state \[      \tikzfig{figs/q_maxmix}  \] of quantum theory.
\end{proof}

Putting together the previous lemmas we can conclude the following theorem.

\begin{thm}
    $\qbox$ is a post-quantum theory.
\end{thm}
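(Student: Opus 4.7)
The plan is direct assembly: unpack the definition of a post-quantum theory and verify each clause using the preceding lemmas. According to the definitions in Section~\ref{sec:hyperdec}, the claim that $\qbox$ is post-quantum breaks into two requirements: first, $\qbox$ supports hyper-decoherence to $\CPTP$, which itself demands that each system $\otimes_i [H_i,H_i]$ admits an endomorphism $\hypdec$ satisfying \hyperref[ax:1]{Ax1}--\hyperref[ax:4]{Ax4}, and that the full sub-process theory of $\Split(\qbox)$ spanned by the objects $(\otimes_i[H_i,H_i], \hypdec)$ is equivalent to $\CPTP$; second, at least one $\hypdec$ is not an identity process.

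First I would invoke the four lemmas in order. Lemma~1 certifies that every $\hypdec$ is idempotent (\hyperref[ax:1]{Ax1}) and no-backwards-signalling (\hyperref[ax:2]{Ax2}). Lemma~3 certifies \hyperref[ax:3]{Ax3}, that any state whose hyper-decohered image is pure is itself pure. Lemma~4 certifies \hyperref[ax:4]{Ax4}, that the maximally mixed state is preserved. Lemma~2 then provides exactly the required equivalence of the relevant full sub-theory of $\Split(\qbox)$ with $\CPTP$. Taken together these deliver the first clause, namely that $\qbox$ supports hyper-decoherence to $\CPTP$ with the given family of maps.

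For the second clause I would exhibit a single system on which $\hypdec$ is visibly not the identity. Take $[H,H]$ with $\dim H \geq 2$: the map $\hypdec$ is the identity on the top wire tensored with the completely depolarising channel on the bottom. Since the completely depolarising channel on a Hilbert space of dimension at least two is distinct from the identity channel, the corresponding supermap differs from the identity supermap on $[H,H]$, establishing the second clause and therefore the theorem.

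There is no substantive obstacle, the theorem being a packaging of previously proved lemmas. The only point not covered by any prior lemma is the non-identity condition, and that is immediate from the observation that the completely depolarising channel in dimension at least two is not the identity channel, so that even after tensoring with the identity on the top wire the supermap remains distinct from the identity supermap on $[H,H]$.
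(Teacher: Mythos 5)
Your proposal is correct and follows essentially the same route as the paper, which simply assembles Lemmas~1--4 to verify Ax1--Ax4 and the required equivalence with $\CPTP$. Your explicit check of the non-identity clause (the depolarising channel on the bottom wire for $\dim H \geq 2$) is a small point the paper leaves implicit, and it is handled correctly.
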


Returning to the no-go theorem of \cite{lee_nogo}, which states that there can be no post-quantum theory which is both causal and supports unique purifications, it is natural to ask how exactly $\qbox$ manages to support a hyper-decoherence.
The role of causality is rather minor in the proof of this no-go theorem, with its role being to provide the existence of some effect with respect to which purifications are expressed.
The no-go theorem does however directly lean on the existence of unique purifications, that is, the requirement that for every mixed state $\rho$ there exists a pure state $\psi$ such that 
\begin{equation*}
    \tikzfig{figs/purif_1} \ =  \   \tikzfig{figs/purif_2}
\end{equation*}
with this pure state being unique up to a reversible transformation on the environment in the sense that for any pair of purifications $\psi_1 , \psi_2$ of $\rho$ there exists a reversible transformation $r$ such that 
\begin{equation*}
    \tikzfig{figs/purif_3r} \ =  \   \tikzfig{figs/purif_4}
\end{equation*}

Since $\qbox$ is built as a theory in which states are quantum channels, it would be natural to imagine that the pure states of $\qbox$ are the pure quantum channels (i.e.\ the isometries or the unitaries), in which case purifications would be expected to be unique by the uniqueness of Stinespring dilations.
Interpreted as a generalised physical theory however, the appropriate notion of purity for states in $\qbox$ is convex extremality and dilations of CPTP maps to extremal CPTP maps are not unique.
We give a proof of this fact in Appendix \ref{app:purifications}.
It is then, the non-uniqueness of purifications for quantum theory with indefinite causal order which allows for it to bypass the no-go and support a hyper-decoherence into standard, causal, quantum theory.

\section{Discussion and Conclusion}

Whilst no causal theory with unique purifications can decohere to quantum theory \cite{lee_nogo}, we have shown that higher-order quantum operations via their non-uniqueness of purifications can decohere to standard quantum theory.
This result gives a suggested mechanism by which causality might arise from a non-causal but still quantum-informational theory, and highlights the subtle role of purity in arguments regarding the existence of hyper-decoherence.

This mechanism has a more direct physical interpretation than previous proposals for theories which hyper-decohere to quantum theory \cite{zyczkowski,dakic_cubes,gogioso_hypercubes,hefford_hypercubes}.
In $\qbox$, the fundamental nature of systems is postulated to be box-like with the observer having the power to implement any higher-order transformation, in other words, the observer has access to both past and future systems.
Hyper-decoherence imposes a reduction in power of the observer by only permitting them access to the future-evolving half of the box. 
There are in this sense, some similarities with decoherence from standard quantum to classical theory which imposes that the observer cannot isolate a system so that it is not continuously interacting with and so being decohered by the environment. 

Despite the possible interpretation in terms of the reduction in the power of the observer, one might naturally object that a hyper-decoherence process which permits an observer access to two systems, and then simply forbids the observer access to one of them, is too trivial and too far from the traditional notion of decoherence.
On the one hand, the possibility of hidden dimensions in this sense could be seen as unreasonable, and indeed, such a hyper-decoherence process was highlighted in \cite{lee_nogo} as precisely as the kind of process which the axioms of hyper-decoherence are intended to rule out. One could conclude that this says something about the fundamental difference between discarding in space and discarding in time, or one might instead conclude that discarding in time should also be ruled out.
The hyper-decoherence process of $\qbox$ motivates a careful re-examination of the axioms of hyper-decoherence.
One natural additional assumption that could be added is the preservation of the dimensionality of degrees of freedom before and after decoherence, and it appears unlikely that the hyper-decoherence from $\qbox$ will satisfy this.
Whether there might still exist post-quantum theories with dimensionality-preserving hyper-decoherence maps is left as a topic for future consideration.

On the other hand, in arguing for the legitimacy of this hyper-decoherence mechanism, it is interesting to realise that the seemingly minor move to allow the additional dimension to be an alternative temporal direction, as we do here, rather than an additional spatial dimension, surprisingly allows for the satisfaction of the axioms of hyper-decoherence, in particular, the purity co-preservation rule \footnote{Although, it should be noted that we took care to interpret purity co-preservation with regards to extremality in the deterministic state space (so extremality in the space of quantum channels). Purity co-preservation within the non-deterministic state space does not hold for our example for the same reasons outlined by Lee and Selby in \cite{lee_nogo} regarding the discarding of spatial dimensions.}.
Moreover, the additional temporal rather than spatial degrees of freedom lead naturally to higher-order quantum theory, and thus a toy model of features of quantum gravity  \cite{Hardy_2007}, precisely the sort of post-quantum theory from which we expect quantum theory to emerge in an appropriate limit.

Regarding future directions, if one accepts this hyper-decoherence process,  it is natural to ask whether there might be a generalised no-go theorem which establishes higher-order quantum theory at the top of any (even non-causal, and non-uniquely-purifiable) ladder of hyper-decoherence.
This could in turn provide an argument for higher-order quantum theory as the only natural candidate for a post-quantum theory from which traditional quantum theory could emerge.
Regarding the existence of such a ladder, most of the steps of the proof presented here are sufficiently general to apply to any underlying process theory $\C$.
In particular, the construction of $\D(\C)$ is theory independent so one can happily consider iterating it to get $\D(\D(\C))$ etc.
Although its multi-environment structure relies upon the definition of process matrices, one could either look specifically at the case $\C=\CP$ and attempt to generalise the permissible discarding effects, or take the general definition of supermaps on higher-order theories outlined in \cite{wilson_locality,hefford_supermaps} to provide the necessary effects in a theory independent fashion.
At that point, one must contend with generalising the hyper-decoherence map, which depends on the unique discarding effect of standard quantum theory to form the completely depolarising map.
It may be that some preferred element of the multi-environment structure should be used or alternatively there might exist a family of possible hyper-decoherence maps, one for each element. Either way, the satisfaction of the axioms of hyper-decoherence may not come for free, in particular, the most subtle axiom \hyperref[ax:3]{Ax3} regarding the copreservation of purity which in the case of $\qbox$ relies on special properties of standard quantum theory.
The end result would be a proof that \textit{any} process theory $\C$ can hyper-decohere from $\D(\C)$ and thus we could build the entire tower of higher-order quantum theory with each order decohering into the one below it.

More speculative, is the possibility of proving that \text{any} theory hyper-decohering to higher-order quantum theory would have to be contained within higher-order quantum theory.
The existing no-go theorem relies heavily on uniqueness of purifications, and so, new techniques and re-axiomatisations might be necessary.
One natural approach could be to adapt the interpretation of purity away from convex extremality, for instance to consider a state to be pure if any extension of it factorises \cite{chiribella2015distinguishabilitycopiabilityprogramsgeneral}.
Another approach could be to consider Yoneda-like techniques for finding internal representative maps of functions with properties from the outside. 
While functions satisfying naturality properties on a process theory may not have internal representatives (being higher-order maps), families of functions satisfying naturality properties on already higher-order process theories, particularly those which are compact closed, are often forced to have internal representatives \cite{wilson2026supermapsgeneralisedtheories} suggesting a sense in which it is harder to step outside from a genuinely higher-order theory.

Another possible route for future work would be to look at higher-order interference, a commonly postulated feature of theories which are more coherent than quantum theory \cite{sorkin_measure,sorkin_measure2,zyczkowski,lee_interference}.
The existence of a hyper-decoherence from $\qbox$ suggests the possibility that $\qbox$ and so also higher-order quantum theory might support higher-order interference phenomena.

In summary, the existence of a hyper-decoherence map from higher-order quantum theory invites a rich line of enquiry into how quantum theory might appear from these toy models of quantum gravity and ultimately from quantum gravity itself while elucidating the care with which issues around purity and causality must be treated when searching for post-quantum theories.

\begin{acknowledgments}
    This work was funded by the French National Research Agency (ANR) within the framework of ``Plan France 2030'', under the research projects EPIQ ANR-22-PETQ-0007 and HQI-R\&D ANR-22-PNCQ-0002.
\end{acknowledgments}

\bibliography{bibliography}

@article{Cavalcanti_2022,
	author = {Cavalcanti, Paulo J and Selby, John H and Sikora, Jamie and Sainz, Ana Bel{\'e}n},
	date-added = {2025-10-24 18:10:28 +0200},
	date-modified = {2025-10-24 18:10:28 +0200},
	doi = {10.1088/1751-8121/ac8ea4},
	issn = {1751-8121},
	journal = {Journal of Physics A: Mathematical and Theoretical},
	month = sep,
	number = {40},
	pages = {404001},
	publisher = {IOP Publishing},
	title = {Decomposing all multipartite non-signalling channels via quasiprobabilistic mixtures of local channels in generalised probabilistic theories},
	url = {http://dx.doi.org/10.1088/1751-8121/ac8ea4},
	volume = {55},
	year = {2022},
	bdsk-url-1 = {http://dx.doi.org/10.1088/1751-8121/ac8ea4}}

@misc{wilson2026supermapsgeneralisedtheories,
      title={Supermaps on generalised theories}, 
      author={Matt Wilson and James Hefford and Timothée Hoffreumon},
      year={2026},
      eprint={2602.23865},
      archivePrefix={arXiv},
      primaryClass={quant-ph},
      url={https://arxiv.org/abs/2602.23865}, 
}

@misc{chiribella2015distinguishabilitycopiabilityprogramsgeneral,
      title={Distinguishability and copiability of programs in general process theories}, 
      author={Giulio Chiribella},
      year={2015},
      eprint={1411.3035},
      archivePrefix={arXiv},
      primaryClass={quant-ph},
      url={https://arxiv.org/abs/1411.3035}, 
}

@article{Hardy_2007,
	author = {Hardy, Lucien},
	date-added = {2025-09-23 16:03:21 +0200},
	date-modified = {2025-09-23 16:03:21 +0200},
	doi = {10.1088/1751-8113/40/12/s12},
	issn = {1751-8121},
	journal = {Journal of Physics A: Mathematical and Theoretical},
	month = mar,
	number = {12},
	pages = {3081--3099},
	publisher = {IOP Publishing},
	title = {Towards quantum gravity: a framework for probabilistic theories with non-fixed causal structure},
	url = {http://dx.doi.org/10.1088/1751-8113/40/12/S12},
	volume = {40},
	year = {2007},
	bdsk-url-1 = {http://dx.doi.org/10.1088/1751-8113/40/12/S12}}

@article{barrett_gpts,
	author = {Barrett, Jonathan},
	doi = {10.1103/PhysRevA.75.032304},
	issue = {3},
	journal = {Phys. Rev. A},
	month = {Mar},
	numpages = {21},
	pages = {032304},
	publisher = {American Physical Society},
	title = {Information processing in generalized probabilistic theories},
	volume = {75},
	year = {2007},
	bdsk-url-1 = {https://doi.org/10.1103/PhysRevA.75.032304}}

@article{chiribella_informational,
	author = {Chiribella, Giulio and D'Ariano, Giacomo Mauro and Perinotti, Paolo},
	doi = {10.1103/PhysRevA.84.012311},
	issue = {1},
	journal = {Phys. Rev. A},
	pages = {012311},
	publisher = {American Physical Society},
	title = {Informational derivation of quantum theory},
	volume = {84},
	year = {2011},
	bdsk-url-1 = {https://doi.org/10.1103/PhysRevA.84.012311}}

@article{chiribella_networks,
	author = {Chiribella, Giulio and D'Ariano, Giacomo Mauro and Perinotti, Paolo},
	doi = {10.1103/PhysRevA.80.022339},
	issue = {2},
	journal = {Phys. Rev. A},
	numpages = {20},
	pages = {022339},
	publisher = {American Physical Society},
	title = {Theoretical framework for quantum networks},
	volume = {80},
	year = {2009},
	bdsk-url-1 = {https://doi.org/10.1103/PhysRevA.80.022339}}

@article{chiribella_purification,
	author = {Chiribella, Giulio and D'Ariano, Giacomo Mauro and Perinotti, Paolo},
	doi = {10.1103/PhysRevA.81.062348},
	issue = {6},
	journal = {Phys. Rev. A},
	pages = {062348},
	publisher = {American Physical Society},
	title = {Probabilistic theories with purification},
	volume = {81},
	year = {2010},
	bdsk-url-1 = {https://doi.org/10.1103/PhysRevA.81.062348}}

@article{chiribella_supermaps,
	author = {Chiribella, Giulio and D'Ariano, Giacomo Mauro and Perinotti, Paolo},
	doi = {10.1209/0295-5075/83/30004},
	journal = {{EPL} (Europhysics Letters)},
	number = {3},
	pages = {30004},
	publisher = {{IOP} Publishing},
	title = {Transforming quantum operations: Quantum supermaps},
	volume = {83},
	year = 2008,
	bdsk-url-1 = {https://doi.org/10.1209/0295-5075/83/30004}}

@article{choi,
	author = {Man-Duen Choi},
	doi = {https://doi.org/10.1016/0024-3795(75)90075-0},
	issn = {0024-3795},
	journal = {Linear Algebra and its Applications},
	number = {3},
	pages = {285-290},
	title = {Completely positive linear maps on complex matrices},
	volume = {10},
	year = {1975},
	bdsk-url-1 = {https://doi.org/10.1016/0024-3795(75)90075-0}}

@article{coecke_classicality,
	author = {Coecke, Bob and Selby, John and Tull, Sean},
	doi = {10.4204/eptcs.266.7},
	journal = {EPTCS},
	pages = {104--118},
	title = {{Two Roads to Classicality}},
	volume = {266},
	year = {2018},
	bdsk-url-1 = {https://doi.org/10.4204/eptcs.266.7}}

@article{coecke_terminality,
	author = {Coecke, Bob},
	doi = {10.1007/s00354-016-0201-6},
	journal = {New Generation Computing},
	number = {1-2},
	pages = {69--85},
	publisher = {Springer},
	title = {{Terminality Implies No-signalling... and Much More Than That}},
	volume = {34},
	year = {2016},
	bdsk-url-1 = {https://doi.org/10.1007/s00354-016-0201-6}}

@book{coecke_kissinger,
	author = {Coecke, Bob and Kissinger, Aleks},
	place = {Cambridge},
	publisher = {Cambridge University Press},
	title = {Picturing Quantum Processes: A First Course in Quantum Theory and Diagrammatic Reasoning},
	year = {2017}}

@article{dakic_cubes,
	author = {Daki{\'c}, B and Paterek, T and Brukner, {\v{C}}},
	doi = {10.1088/1367-2630/16/2/023028},
	journal = {New Journal of Physics},
	month = {feb},
	number = {2},
	pages = {023028},
	publisher = {IOP Publishing},
	title = {Density cubes and higher-order interference theories},
	url = {https://dx.doi.org/10.1088/1367-2630/16/2/023028},
	volume = {16},
	year = {2014},
	bdsk-url-1 = {https://dx.doi.org/10.1088/1367-2630/16/2/023028}}

@inproceedings{gogioso_cpm,
	author = {Gogioso, Stefano},
	booktitle = {Proceedings QPL 2018},
	doi = {10.4204/EPTCS.287.8},
	journal = {EPTCS},
	pages = {145-162},
	publisher = {EPTCS},
	title = {{Higher-order CPM Constructions}},
	volume = {287},
	year = 2019,
	bdsk-url-1 = {https://doi.org/10.4204/EPTCS.287.8}}

@inproceedings{gogioso_cpt,
	author = {Gogioso, Stefano and Scandolo, Carlo Maria},
	booktitle = {Proceedings QPL 2017},
	doi = {10.4204/EPTCS.266.23},
	journal = {EPTCS},
	pages = {367-385},
	publisher = {EPTCS},
	title = {{Categorical Probabilistic Theories}},
	volume = {266},
	year = 2018,
	bdsk-url-1 = {https://doi.org/10.4204/EPTCS.266.23}}

@inproceedings{gogioso_hypercubes,
	author = {Gogioso, Stefano and Scandolo, Carlo Maria},
	booktitle = {Quantum Interaction},
	doi = {10.1007/978-3-030-35895-2_10},
	pages = {141-160},
	publisher = {Springer International Publishing},
	title = {{Density Hypercubes, Higher Order Interference and Hyper-decoherence: A Categorical Approach}},
	year = 2019,
	bdsk-url-1 = {https://doi.org/10.1007/978-3-030-35895-2_10}}

@phdthesis{gogioso_thesis,
	author = {Gogioso, Stefano},
	school = {University of Oxford},
	title = {Categorical quantum dynamics},
	type = {{DPhil thesis}},
	year = 2016}

@misc{hardy,
	archiveprefix = {arXiv},
	author = {Hardy, Lucien},
	eprint = {quant-ph/0101012},
	primaryclass = {quant-ph},
	title = {{Quantum Theory From Five Reasonable Axioms}},
	url = {https://arxiv.org/abs/quant-ph/0101012},
	year = {2001},
	bdsk-url-1 = {https://arxiv.org/abs/quant-ph/0101012}}

@inproceedings{hefford_coend,
	author = {Hefford, James and Comfort, Cole},
	booktitle = {Proceedings ACT 2022},
	doi = {10.4204/EPTCS.380.4},
	journal = {EPTCS},
	pages = {63-76},
	publisher = {EPTCS},
	title = {{Coend Optics for Quantum Combs}},
	volume = 380,
	year = {2023},
	bdsk-url-1 = {https://doi.org/10.4204/EPTCS.380.4}}

@inproceedings{hefford_galois,
	author = {Hefford, James and Gogioso, Stefano},
	booktitle = {Proceedings QPL 2021},
	doi = {10.4204/eptcs.343.9},
	journal = {EPTCS},
	pages = {165--192},
	publisher = {EPTCS},
	title = {{CPM Categories for Galois Extensions}},
	volume = {343},
	year = 2021,
	bdsk-url-1 = {https://doi.org/10.4204/eptcs.343.9}}

@inproceedings{hefford_hypercubes,
	author = {Hefford, James and Gogioso, Stefano},
	booktitle = {Proceedings QPL 2020},
	doi = {10.4204/eptcs.340.7},
	journal = {EPTCS},
	pages = {141--159},
	publisher = {EPTCS},
	title = {{Hyper-decoherence in Density Hypercubes}},
	volume = {340},
	year = 2021,
	bdsk-url-1 = {https://doi.org/10.4204/eptcs.340.7}}

@phdthesis{hefford_thesis,
	author = {Hefford, James},
	doi = {10.5287/ora-proe1e0be},
	school = {University of Oxford},
	title = {Categorical post-quantum theories},
	type = {{DPhil thesis}},
	year = 2023,
	bdsk-url-1 = {https://doi.org/10.5287/ora-proe1e0be}}

@inproceedings{hefford_supermaps,
	author = {Hefford, James and Wilson, Matt},
	title = {{A Profunctorial Semantics for Quantum Supermaps}},
	year = {2024},
	publisher = {Association for Computing Machinery},
	address = {New York, NY, USA},
	doi = {10.1145/3661814.3662123},
	booktitle = {Proceedings of the 39th Annual ACM/IEEE Symposium on Logic in Computer Science},
	articleno = {43},
	location = {Tallinn, Estonia},
	series = {LICS '24}
}

@inproceedings{heunen_completely,
	author = {Heunen, Chris and Kissinger, Aleks and Selinger, Peter},
	booktitle = {Proceedings QPL 2013},
	doi = {10.4204/eptcs.171.7},
	journal = {EPTCS},
	month = dec,
	pages = {71-83},
	publisher = {EPTCS},
	title = {Completely positive projections and biproducts},
	url = {http://dx.doi.org/10.4204/EPTCS.171.7},
	volume = {171},
	year = {2014},
	bdsk-url-1 = {http://dx.doi.org/10.4204/EPTCS.171.7}}

@article{kissinger_caus,
	author = {Kissinger, Aleks and Uijlen, Sander},
	doi = {10.23638/LMCS-15(3:15)2019},
	issn = {18605974},
	issue = 3,
	journal = {Logical Methods in Computer Science},
	title = {{A categorical semantics for causal structure}},
	volume = 15,
	year = {2019},
	bdsk-url-1 = {https://doi.org/10.23638/LMCS-15(3:15)2019}}

@article{lee_interference,
	author = {Lee, Ciar\'an M. and Selby, John H.},
	doi = {10.1007/s10701-016-0045-4},
	journal = {Found Phys},
	pages = {89-112},
	title = {{Higher-Order Interference in Extensions of Quantum Theory}},
	volume = 47,
	year = 2017,
	bdsk-url-1 = {https://doi.org/10.1007/s10701-016-0045-4}}

@article{lee_nogo,
	author = {Lee, Ciar\'an M. and Selby, John H.},
	doi = {10.1098/rspa.2017.0732},
	journal = {Proc. R. Soc. A},
	number = 20170732,
	title = {A no-go theorem for theories that decohere to quantum mechanics},
	volume = {474},
	year = 2018,
	bdsk-url-1 = {https://doi.org/10.1098/rspa.2017.0732}}

@article{oreshkov,
	author = {Oreshkov, Ognyan and Costa, Fabio and Brukner, {\v C}aslav},
	doi = {10.1038/ncomms2076},
	journal = {Nature Communications},
	number = 1092,
	title = {Quantum correlations with no causal order},
	volume = 3,
	year = 2012,
	bdsk-url-1 = {https://doi.org/10.1038/ncomms2076}}

@phdthesis{selby_thesis,
	author = {Selby, John H.},
	doi = {10.25560/56609},
	school = {Imperial College London},
	title = {A process theoretic triptych},
	type = {{PhD thesis}},
	year = 2017,
	bdsk-url-1 = {https://doi.org/10.25560/56609}}

@article{selinger_idempotents,
	author = {Selinger, Peter},
	doi = {10.1016/j.entcs.2008.04.021},
	journal = {Electronic Notes in Theoretical Computer Science},
	pages = {107-122},
	title = {{Idempotents in Dagger Categories: (Extended Abstract)}},
	volume = {210},
	year = 2008,
	bdsk-url-1 = {https://doi.org/10.1016/j.entcs.2008.04.021}}

@article{sorkin_measure,
	author = {Sorkin, Rafael D.},
	doi = {10.1142/S021773239400294X},
	journal = {Mod. Phys. Lett. A},
	number = 33,
	pages = {3119-3127},
	title = {Quantum Mechanics as Quantum Measure Theory},
	volume = {9},
	year = 1994,
	bdsk-url-1 = {https://doi.org/10.1142/S021773239400294X}}

@inbook{sorkin_measure2,
	author = {Sorkin, Rafael D.},
	chapter = {Quantum Measure Theory and its Interpretation},
	pages = {229-251},
	publisher = {International Press, Cambridge Mass.},
	title = {Quantum Classical Correspondence: Proceedings of the 4th Drexel Symposium on Quantum Nonintegrability},
	year = 1997}

@misc{wilson_locality,
	author = {Wilson, Matt and Chiribella, Giulio and Kissinger, Aleks},
	doi = {10.48550/ARXIV.2205.09844},
	publisher = {arXiv},
	title = {Quantum Supermaps are Characterized by Locality},
	year = {2022},
	bdsk-url-1 = {https://doi.org/10.48550/ARXIV.2205.09844}}

@inproceedings{Wilson_2021,
	author = {Wilson, Matt and Chiribella, Giulio},
	booktitle = {Proceedings QPL 2021},
	date-added = {2025-09-23 12:06:12 +0200},
	date-modified = {2025-09-23 12:06:12 +0200},
	doi = {10.4204/eptcs.343.12},
	journal = {EPTCS},
	pages = {265--300},
	publisher = {EPTCS},
	title = {{Causality in Higher Order Process Theories}},
	volume = {343},
	year = {2021},
	bdsk-url-1 = {http://dx.doi.org/10.4204/EPTCS.343.12}}

@article{zyczkowski,
	author = {{\.Z}yczkowski, Karol},
	doi = {10.1088/1751-8113/41/35/355302},
	journal = {Journal of Physics A: Mathematical and Theoretical},
	month = {jul},
	number = {35},
	pages = {355302},
	title = {Quartic quantum theory: an extension of the standard quantum mechanics},
	url = {https://dx.doi.org/10.1088/1751-8113/41/35/355302},
	volume = {41},
	year = {2008},
	bdsk-url-1 = {https://dx.doi.org/10.1088/1751-8113/41/35/355302}}

\appendix

\section{No Superluminal Signalling in $\qbox$}\label{app:superluminal}
In this section we show that the theory $\qbox$ is no-superluminal signalling, meaning that for any bipartite state $\psi$ and pair of effects $\Ground_1, \Ground_2$ we have the following
\begin{equation*}
    \tikzfig{figs/no_super_5} \ =  \   \tikzfig{figs/no_super_4}.
\end{equation*}
The satisfaction of this theorem in the case in which the left and right systems are atomic, meaning that $\psi$ simply has the form $[L,L'] \otimes [R,R']$, is proven already in \cite{Wilson_2021}.
To extend this to general bipartite states in $\qbox$ we must check for arbitrary states of type $(\otimes_i [L_i,L_i']) \otimes (\otimes_k [R_k,R_k'])$. For this case, note that any (arbitrary-arity) multi-partite non-signalling channel in any GPT can be rewritten as an affine linear combination of localised quantum channels \cite{Cavalcanti_2022}.
Therefore, since any bipartite state in $\qbox$ is an arbitrary-arity multi-partite non-signalling channel, it can be rewritten as an affine linear combination of its left and right parts.

As a result, we have that for any bipartite state $\psi$ and pair of effects $\Ground_1, \Ground_2$, 
\begin{equation*}
    \tikzfig{figs/no_super_5} \ =  \ \sum_{ik}\alpha_{ik}  \tikzfig{figs/super_app_1}  \ =  \ \sum_{ik}\alpha_{ik}   \tikzfig{figs/super_app_2}
\end{equation*}
\begin{equation*}
  \ =  \  \sum_{ik}\alpha_{ik}    \tikzfig{figs/super_app_3} \ =  \   \tikzfig{figs/no_super_4},
\end{equation*}
and so there can be no superluminal signalling via states in $\qbox$.

\section{Purifications in $\qbox$}\label{app:purifications}

In this section we will show that purifications in $\qbox$ are not unique.
Consider the following two non-signalling channels i.e.\ bipartite states when interpreted in $\qbox$,
\begin{equation*}
    \tikzfig{figs/unique_1},   \quad  \quad  \quad    \tikzfig{figs/unique_2double}
\end{equation*}
where $U_0$ and $U_1$ are taken to be two distinct unitaries $U_0 \neq U_1$ up to a global phase and $\mathcal{F}$ is taken to be the doubling functor from the process theory of linear maps between complex vectors spaces to the process theory of CP maps. This functor acts as the identity on objects and acts on morphisms by $\mathcal{F}(U)(\rho) =U \rho U^{\dagger}$.
Note that both of these bipartite states in $\qbox$ possess the same reduced state
\begin{equation*}
    \tikzfig{figs/unique_4} \ = \      \tikzfig{figs/unique_5} \ = \    \tikzfig{figs/unique_6double}
\end{equation*}
If purifications were unique up to a reversible process on the environment, then there would exist some reversible comb such that
\begin{equation*}
    \tikzfig{figs/unique_8} \ = \       \tikzfig{figs/unique_2double} 
\end{equation*}
However, this would entail that
\begin{equation*}
    \tikzfig{figs/unique_9adouble} \ = \        \tikzfig{figs/unique_9bdouble}  
\end{equation*}
\begin{equation*}
    \ = \   \tikzfig{figs/unique_9c}  
\end{equation*}
\begin{equation*}
 \ = \        \tikzfig{figs/unique_10}     \ = \   \tikzfig{figs/unique_11} 
\end{equation*}

\begin{equation*}
\ = \        \tikzfig{figs/unique_12adouble_new}
\end{equation*}
    
\begin{equation*}
\ = \        \tikzfig{figs/unique_12adouble} \ = \   \tikzfig{figs/unique_12bdouble},    
\end{equation*}
and so
\begin{equation*}
    \tikzfig{figs/unique_13pre} \ = \     \tikzfig{figs/unique_14pre}    ,
\end{equation*}
and up to a global phase
\begin{equation*}
    \tikzfig{figs/unique_13} \ = \     \tikzfig{figs/unique_14}    ,
\end{equation*}
which is a contradiction with the initial assumption that $U_0 \neq U_1$ up to a global phase. Therefore, purifications in $\qbox$ cannot be unique.

\section{Equivalence of Hyper-decohered Systems of $\qbox$ with Quantum Theory}\label{app:equivalence}
\begin{proof}
    Denote by $\mathsf{Hypdec}$ the category of hyper-decohered systems, that is the full sub-process theory of $\Split(\qbox)$ spanned by systems of the form $(\otimes_i [H_i,H_i],\hypdec)$.
    As outlined in the main text, we will show that there is a fully faithful and essentially surjective on objects monoidal functor (an equivalence of process theories) $F:\mathsf{Hypdec}\morph{}\CPTP$.
    Define $F(\otimes_i [H_i,H_i],\hypdec) := \otimes_i H_i$ on systems and on processes as,
    \begin{equation*}
        \tikzfig{figs/equiv_proof1_ico}
    \end{equation*}
    Again, since this is the partial application of a superchannel on two CPTP maps, the result is automatically CPTP.    
    It is straightforward to see that $F$ is a functor: the identity on $(\otimes_i [H_i,H_i],\hypdec)$ is given by $\hypdec$ which is sent to $1_{\otimes_i H_i}$ by $F$, and composition is preserved because the morphisms of $\mathsf{Hypdec}$ are hyper-decohered and thus take the following form.
    \begin{equation*}
        \tikzfig{figs/equiv_proof2_ico}
    \end{equation*}
    $F$ preserves tensor products because,
    \begin{align*}
        & F\big((\otimes_i[H_i,H_i],\hypdec)\otimes(\otimes_j[K_j,K_j],\hypdec)\big) \\ 
        & = F\big((\otimes_i[H_i,H_i])\otimes(\otimes_j[K_j,K_j]),\hypdec\big) \\
        & = (\otimes_i H_i) \otimes (\otimes_j K_j) \\
        & = F(\otimes_i[H_i,H_i],\hypdec)\otimes F(\otimes_j[K_j,K_j],\hypdec)
    \end{align*}
    $F$ is clearly essentially surjective on objects: any Hilbert space $H$ is the image of $[H,H]$ under $F$.
    $F$ is full because each $f:\otimes_i H_i\morph{}\otimes_j K_j$ is the image of the supermap,
    \begin{equation*}
        \tikzfig{figs/equiv_proof3_ico}: \begin{array}{l} (\otimes_i [H_i,H_i],\hypdec) \\ \morph{}(\otimes_j [K_j,K_j],\hypdec) \end{array}
    \end{equation*}
    Finally, $F$ is faithful, since if two supermaps $S,T: (\otimes_i [H_i,H_i],\hypdec)\morph{}(\otimes_j [K_j,K_j],\hypdec)$ are such that their hyper-decohered versions are equal,
    \begin{equation*}
        \tikzfig{figs/equiv_proof4_ico}
    \end{equation*}
    then it follows that the original supermaps were equal.
    \begin{align*}
       &  \tikzfig{figs/equiv_proof5_ico_a} \\
        = \ &   \tikzfig{figs/equiv_proof5_ico_b}  
    \end{align*}
    This completes the proof, though it is also straightforward to construct the inverse functor $G:\CPTP \morph{} \mathsf{Hypdec}$.
    On objects $G(H) = ([H,H],\hypdec)$ and on morphisms $f:H\morph{}K$ is sent to the supermap 
    \[  \tikzfig{figs/equiv_proof6_ico}  :([H,H],\hypdec)\morph{}([K,K],\hypdec)\]
    It is fairly easy to write down the required natural isomorphisms $\varepsilon:FG\Rightarrow 1_{\CPTP}$ and $\eta:1_{\mathsf{Hypdec}}\Rightarrow GF$.
    The components of $\varepsilon$ are the identity, while those of $\eta$ are given by the supermaps $\otimes_i[H_i, H_i] \morph{} [\otimes_i H_i , \otimes_i H_i]$ that are the identity on top and prepare the maximally mixed state on the bottom. 
\end{proof}

\end{document}